\newcommand{\bp}{\begin{proof} \small }
\newcommand{\ep}{\end{proof} \normalsize}
\newcommand{\epx}{\end{proof} \small}
\newcommand{\bpa}{\begin{proofappx} \footnotesize }
\newcommand{\epa}{\end{proofappx} \small }
\newtheorem{theorem}{Theorem}
\newtheorem*{theorem*}{Theorem}
\newtheorem*{proposition*}{Proposition}
\newtheorem*{corollary*}{Corollary}
\newtheorem*{lemma*}{Lemma}
\newtheorem*{assumption*}{Assumption}
\newtheorem*{definition*}{Definition}
\newtheorem*{claim*}{Claim}
\newcommand{\be}{\begin{equation}}
\newcommand{\ee}{\end{equation}}
\newcommand{\bs}{\begin{subequations}}
\newcommand{\es}{\end{subequations}}
\newcommand{\bq}{\begin{eqnarray}}
\newcommand{\eq}{\end{eqnarray}}
\newcommand{\bqn}{\begin{eqnarray*}}
\newcommand{\eqn}{\end{eqnarray*}}
\newcommand{\ba}{\left[ \begin{array}}
\newcommand{\ea}{\\ \end{array} \right]}
\newcommand{\ben}{\begin{enumerate}}
\newcommand{\een}{\end{enumerate}}
\def\g{{\boldsymbol{g}}}
\def\w{{\boldsymbol{w}}}
\def\x{{\boldsymbol{x}}}
\def\y{{\boldsymbol{y}}}
\def\z{{\boldsymbol{z}}}
\def\real{{\mathchoice%
{\hbox{\rm\setbox1=\hbox{I}\copy1\kern-.45\wd1 R}}
{\hbox{\rm\setbox1=\hbox{I}\copy1\kern-.45\wd1 R}}
{\hbox{\scriptsize\rm\setbox1=\hbox{I}\copy1\kern-.45\wd1 R}}
{\hbox{\scriptsize\rm\setbox1=\hbox{I}\copy1\kern-.45\wd1 R}}}}
\def\Zint{{\mathchoice{\setbox1=\hbox{\sf Z}\copy1\kern-.75\wd1\box1}
{\setbox1=\hbox{\sf Z}\copy1\kern-.75\wd1\box1}
{\setbox1=\hbox{\scriptsize\sf Z}\copy1\kern-.75\wd1\box1}
{\setbox1=\hbox{\scriptsize\sf Z}\copy1\kern-.75\wd1\box1}}}
\newcommand{\complex}{ \hbox{\rm C\kern-0.45em\rule[.07em]{.02em}{.58em}%
\kern 0.43em}}
\begin{document}
	%
\title{Energy-Aware Analog Aggregation for \\Federated Learning with Redundant Data}
	
\author{\IEEEauthorblockN{Yuxuan Sun$^*$, Sheng Zhou$^*$, Deniz G\"und\"uz$^\dagger$}\\
\IEEEauthorblockA{$^*$Beijing National Research Center for Information Science and Technology\\
		Department of Electronic Engineering, Tsinghua University, Beijing 100084, China\\
		$^\dagger$Department of Electrical and Electronic Engineering, Imperial College London, London SW7 2BT, UK\\
		Email: \{sunyx15@mails., sheng.zhou@\}tsinghua.edu.cn, d.gunduz@imperial.ac.uk}
	}

\maketitle

\begin{abstract}
Federated learning (FL) enables workers to learn a model collaboratively by using their local data, with the help of a parameter server (PS) for global model aggregation. The high communication cost for periodic model updates and the non-independent and identically distributed (i.i.d.) data become major bottlenecks for FL. In this work, we consider analog aggregation to scale down the communication cost with respect to the number of workers, and introduce data redundancy to the system to deal with non-i.i.d. data. We propose an online energy-aware dynamic worker scheduling policy, which maximizes the average number of workers scheduled for gradient update at each iteration under a long-term energy constraint, and analyze its performance based on Lyapunov optimization. Experiments using MNIST dataset show that, for non-i.i.d. data, doubling data storage can improve the accuracy by $9.8\%$ under a stringent energy budget, while the proposed policy can achieve close-to-optimal accuracy without violating the energy constraint.
\end{abstract}


	
%
\IEEEpeerreviewmaketitle

\section{Introduction}

With the rapid development of machine learning (ML) techniques, emerging applications, including virtual and augmented reality, Internet of things, autonomous driving and e-health services, are penetrating into human lives \cite{park2018wireless}. ML models for these applications are typically trained in central clouds. However, centralized training leads to high communication costs, and causes privacy concerns in applications that involve sensitive personal data.

Meanwhile, the end devices such as smart phones, vehicles and sensors and the infrastructures like base stations (BSs) and road side units are being equipped with more computing resources, enabling intensive computations at the network edge, namely  \emph{multi-access edge computing} \cite{yao2017mec,li2018iot,zhou2019exploiting}.
With the help of edge intelligence and to address the privacy concerns, a distributed ML framework called \emph{federated learning} (FL) has been proposed recently \cite{googleai,flatscale,li2019federated}, where end devices, called \emph{workers}, learn a shared ML model collaboratively using their local data, with the help of a central \emph{parameter server (PS)} which aggregates the global model and coordinates the training process. Since the PS acquires the model update from each worker rather than their data, the privacy is preserved. 

The high communication costs and non-independent and identically distributed (i.i.d.) data are the two major bottlenecks in FL \cite{li2019federated}.
According to \cite{zhao2018federated}, when using highly non-i.i.d. data for FL, the accuracy drops by $11\%$ for MNIST and $51\%$ for CIFAR-10 as compared to using i.i.d. data. 
They prove that the non-i.i.d. level, i.e., the difference between the local and global data distributions, is the root cause of the performance degradation.
This problem is tackled by sharing publicly available i.i.d. data with the workers in \cite{zhao2018federated}, or workers sharing a limited portion of their data with the PS in \cite{yoshida2019hybrid}.

The communication burden of FL mainly comes from the global model aggregation, which can be reduced by efficient scheduling and resource allocation \cite{yang2019scheduling,zeng2019energy,wang2019adaptive,tran2019federated, abad2019hierar}, gradient quantization and sparsification \cite{abad2019hierar, mma2019machine,mma2019federated}, or via analog aggregation \cite{mma2019machine,mma2019federated,zhu2019broadband}.
An analytical study on the convergence rate achieved by random, round robin and proportional fair scheduling policies is carried out in \cite{yang2019scheduling}. An energy-efficient bandwidth allocation and worker scheduling scheme is proposed in \cite{zeng2019energy}, minimizing the energy consumption while maximizing the fraction of workers scheduled. A more general resource constraint, including both communication and computing resources, is considered in \cite{wang2019adaptive,tran2019federated}.
In \cite{abad2019hierar}, a hierarchical FL architecture is proposed, and the end-to-end latency is minimized by jointly considering model sparsification and the two-tier update interval. 
Quantization and error accumulation techniques are further considered in \cite{mma2019machine,mma2019federated} to reduce the communication cost.

Most papers on FL consider digital transmission for global model aggregation. However, the communication latency scales with the number of workers \cite{zhu2019broadband}. Observing that the PS is interested only in the \emph{average} of local models rather than their individual values, a promising solution is to use \emph{analog aggregation} which exploits the \emph{waveform-superposition property} of a wireless multiple access channel (MAC) \cite{mma2019machine,mma2019federated,zhu2019broadband}. If the workers synchronize with each other and align the transmit power, the summation of local models can be carried out \emph{over-the-air}. The tradeoff between signal-to-noise ratio (SNR) and the amount of exploited data is analyzed in \cite{zhu2019broadband}, while gradient compression and error accumulation are considered in \cite{mma2019machine,mma2019federated} to further improve the bandwidth efficiency of FL. While over-the-air aggregation requires channel state information at the workers, it is shown in \cite{mma2019collaborative} that this requirement can be released if the PS has multiple antennas.

Existing papers on analog aggregation mainly consider power allocation under specific channel models, and have not addressed the non-i.i.d. data.
In this work, we consider \emph{analog aggregation} for FL, where each worker has a \emph{long-term energy budget}, and \emph{data redundancy} is introduced to the system via data exchange or overlapped data collection. We propose an \emph{energy-aware dynamic} worker scheduling policy, which maximizes the average weighted fraction of scheduled workers without assuming specific channel models or requiring any future information, and analyze its performance based on Lyapunov optimization. Experiments using MNIST dataset show that, data redundancy can bring significant accuracy improvement when data is non-i.i.d., while the proposed policy can smartly utilize the available energy to achieve close-to-optimal accuracy.

The rest of the paper is organized as follows. In Sec. II, we introduce the system model and problem formulation. The worker scheduling policy is proposed in Sec. III, along with its performance analysis. Experiment results are presented in Sec. IV, and the paper is finally summarized in Sec. V.

\section{System Model and Problem Formulation}
\subsection{Federated Learning Architecture}

 \begin{figure*}[!t]
	\centering	
	\subfigure[Architecture of FL, where redundancy comes from data exchange between workers.]{\label{sys1}			
		\includegraphics[width=0.55\textwidth]{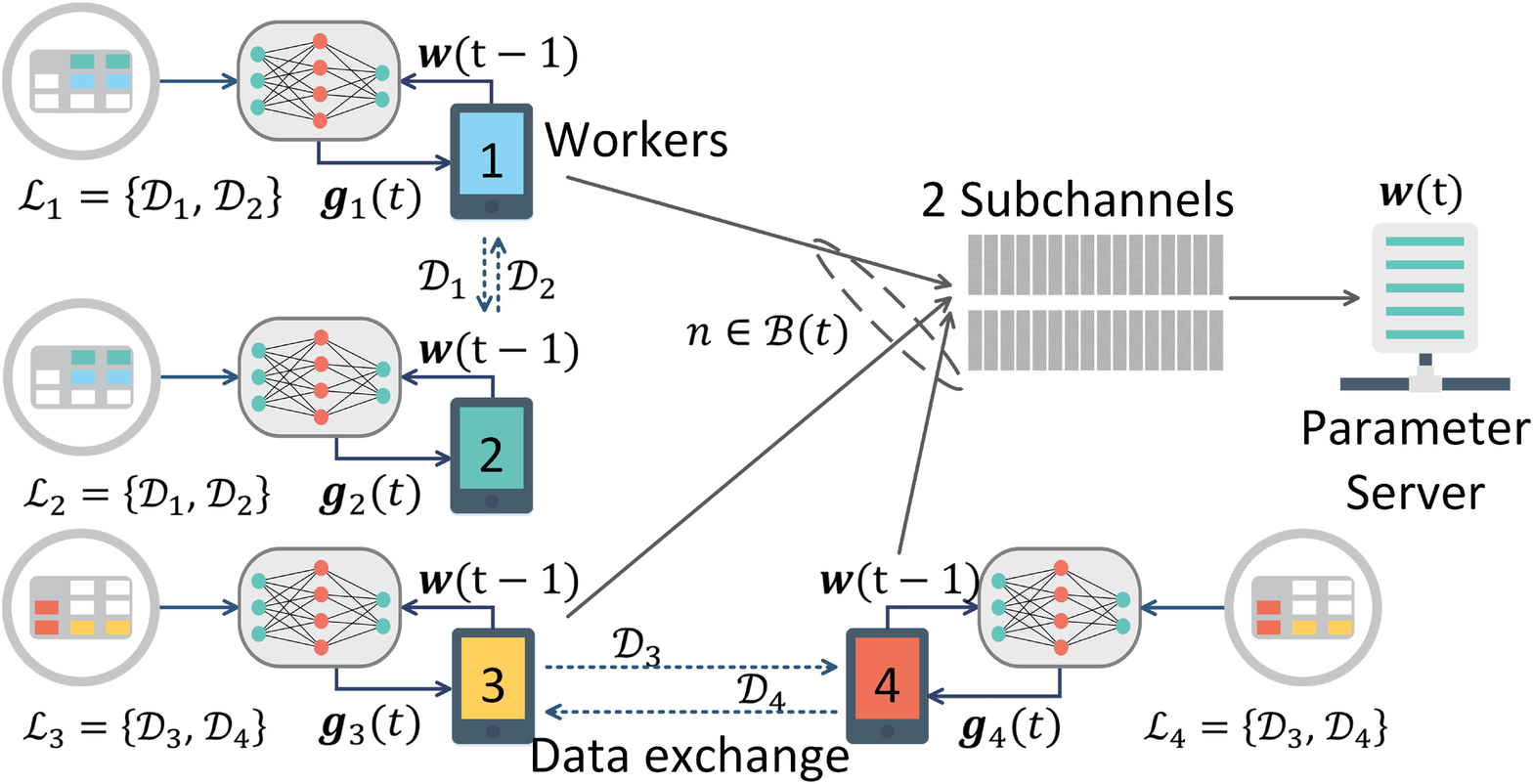}} 	
	\hspace{7mm}
	\subfigure[Overlapped data collection.]{\label{sys_re}	
		\includegraphics[width=0.28\textwidth]{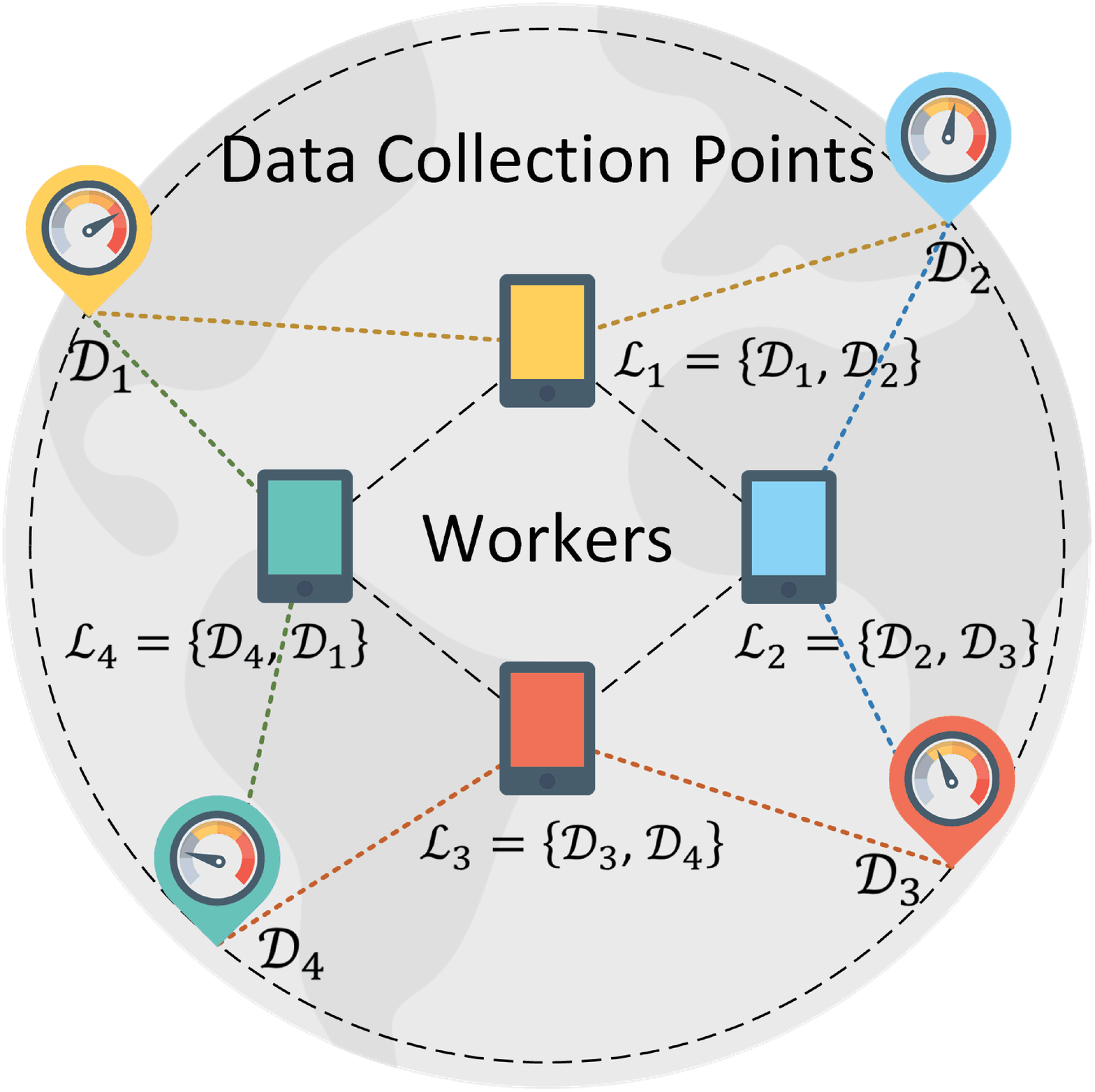}} 
	\caption{Illustration of a FL system and the acquisition of data redundancy.}
	\label{system}
\end{figure*}

As shown in Fig. \ref{sys1}, we consider an FL system with a single PS and $N$ homogeneous workers $\mathcal{N}=\{1,2,\cdots,N\}$. 
To tackle non-i.i.d. data, we consider introducing \emph{data redundancy} to the system via 1) \emph{data exchange}, i.e., workers exchange their local data with neighboring workers they trust, or 2) \emph{overlapped collection}, i.e., in IoT networks, the sensing area of sensors (workers) are overlapped with each other. For example, in Fig. \ref{sys_re}, there are 4 data collection points, and each worker collects data from 2 points.
As opposed to \cite{zhao2018federated} and \cite{yoshida2019hybrid}, this does not require sharing any data samples at the PS.
We assume that there are $K$ original datasets $\mathcal{D}_{1}, \cdots, \mathcal{D}_{K}$ generated by workers (in data exchange case) or data collection points (in overlapped collection case), each has the same number of data samples, denoted by $D$. 
The global dataset is defined as $\mathcal{D}=\bigcup_{k=1,2,\cdots, K}\mathcal{D}_k$, with $K\!D$ data samples $\{\x_1,\cdots, \x_{K\!D}\}$.
For simplicity and without loss of generality, we assume $K=N$ in the following.

The data redundancy of the system is denoted by $r$, indicating that each original dataset is stored at $r$ different workers.
The local dataset owned by worker $n$ is denoted by $\mathcal{L}_{n}$, with $|\mathcal{L}_{n}|=rD$.
One example to obtain redundancy $r$ is to collect or exchange the original datasets in a cyclic manner: Let worker $n$ stores $\mathcal{L}_{n}=\bigcup_{n\in \mathcal{H}_n}\mathcal{D}_n$, where the set of indexes is $\mathcal{H}_n=\{H(n),~\cdots, H(n+r-1)\}$, with
\begin{align} \label{cyclic_r}
	H(n)=
	\begin{cases}
	n, &1\leq n \leq N,  \\
	n-N, &n>N.
	\end{cases}
\end{align}
Fig. \ref{sys_re} is an example of obtaining redundancy $r=2$ in the data collection scenario, with $K=N=4$.

The goal of FL is to minimize the global loss 
\begin{align}
\min_{\w} F(\w)\triangleq\frac{1}{ND} \sum_{i\in\mathcal{D}} f(\w,\x_i),
\end{align}
where $f(\w,\x_i)$ is a loss function designed for the FL task, and $\w\in \mathbb{R}^{s}$ is the parameter vector to be optimized.

In the $t$-th training round, the PS broadcasts the global parameter vector $\w(t-1)$  obtained in the last round to all the workers. 
We assume that the PS is a more capable node with sufficient energy resources (e.g., a BS); therefore the broadcast of the global parameter vector is error-free.
Each worker randomly picks up a fraction $\lambda_n(t)$ of data samples $\mathcal{L}_n(t)\subseteq \mathcal{L}_n$, with $|\mathcal{L}_n(t)|=\lambda_n(t) rD$, to evaluate its local gradient estimate $\g_n(t)$ as:
\begin{align} \label{up_gra}
\g_n(t)=\frac{1}{\lambda_n(t) rD} \sum_{\x_i\in\mathcal{L}_n(t)}\nabla f\left(\w(t-1),\x_i\right) .
\end{align}
Here we can let $\lambda_n(t)=\frac{1}{r}$, so that data redundancy \emph{does not} bring additional computing workloads to the workers for training, but only increases the storage cost.

We define an indicator function $\beta_n(t)$, where $\beta_n(t)=1$ if worker $n$ is scheduled to upload gradient in the $t$-th round, and $\beta_n(t)=0$ otherwise. 
Further define the set of workers scheduled in round $t$ as $\mathcal{B}(t)=\{n\in\mathcal{N}: \beta_n(t)=1\}$. The global parameter vector $\w(t)$ is updated according to
\begin{align}\label{up_model}
\w(t)=\w(t-1)-\eta_t \frac{1}{|\mathcal{B}(t)|}\sum_{n\in\mathcal{B}(t)} \boldsymbol{g}_n(t),
\end{align}
where $\eta_t$ is the learning rate.

\subsection{Analog Aggregation}
For the aggregation of the local gradients, we consider \emph{analog transmission} via a wireless MAC with $M$ sub-channels.
If worker $n$ is scheduled, its local gradient $\g_n(t)$ is evenly partitioned into $M$ segments $\g_n(t)=\left[\g_{1,n}(t),~\cdots,~\g_{M,n}(t)\right]$, where, for $m=1,\cdots,M$, $\g_{m,n}(t)$ is a vector with either $\left\lceil \frac{s}{M}\right\rceil$ or $\left\lfloor \frac{s}{M}\right\rfloor$ entries, and transmitted via sub-channel $m$.

In order to carry out the summation of the local gradients over-the-air, all the scheduled workers need to be synchronized and align their transmit power. Specifically, in round $t$, denote the power allocated to worker $n$ within sub-channel $m$ by $p_{m,n}(t)$, which satisfies
\begin{align}
p_{m,n}(t)=\frac{\beta_n(t)\sigma_m(t)}{h_{m,n}(t)},
\end{align}
where $h_{m,n}(t)$ is the channel gain between worker $n$ and the PS in sub-channel $m$, and $\sigma_m(t)$ is a power scalar to determine the received SNR.
We assume that $h_{m,n}(t)$ remains constant within each round, but we do not limit $h_{m,n}(t)$ to any specific distribution. 
We also assume that each worker $n$ has perfect knowledge of its current channel gains $h_{m,n}(t)$, $\forall m$.
Within sub-channel $m$, each scheduled worker $n \in \mathcal{B}(t)$ transmits $p_{m,n}(t)\g_{m,n}(t)$ to the PS.
The total communication latency in each round is $\left\lceil \frac{s}{M}\right\rceil$ times the symbol duration, regardless of the number of workers scheduled. 
We remark that, the consideration of sub-channels enables us to implement analog aggregation in the current digital transmit systems such as orthogonal frequency division multiplexing (OFDM) with minor changes \cite{zhu2019broadband}. 
However, we consider a worker-level schedule $\beta_n(t)$ rather than a sub-channel-worker-level schedule $\beta_{m,n}(t)$ in this work, so that the PS can receive the whole gradients of the scheduled workers, as shown in \eqref{up_model}.

At the PS side, the received signal over sub-channel $m$ can be written as
\begin{align} \label{agg_y}
\y_m(t)=\sum_{n\in\mathcal{B}(t) } h_{m,n}(t) p_{m,n}(t) \g_{m,n}(t)+\z_m(t) =\sigma_m(t)\sum_{n\in\mathcal{B}(t)}\g_{m,n}(t)+\z_m(t),
\end{align}
where $\z_m(t)$ is an i.i.d. additive white Gaussian noise (AWGN) vector, with each entry following the standard normal distribution.
The $m$-th segment of the global parameter vector, $\w_m(t)$, is updated according to
\begin{align} \label{model_update}
\w_m(t)&=\w_m(t-1)-\frac{\eta_t}{\left|\mathcal{B}(t)\right|\sigma_m(t)}\y_m(t) \nonumber\\
&=\w_m(t-1)-\frac{\eta_t}{\left|\mathcal{B}(t)\right|} \sum_{n\in\mathcal{B}(t)}\g_{m,n}(t) - \frac{\eta_t}{\left|\mathcal{B}(t)\right|\sigma_m(t)}\z_m(t).
\end{align}
And finally, the global parameters are concatenated into vector $\w(t)=\left[\w_1(t),~\cdots,~\w_M(t)\right]$.


\subsection{Problem Formulation}
Our objective is to minimize the global loss $F(\w(T-1))$ after $T$ training rounds, by optimizing the worker schedule $\{\beta_n(t)\}$ and power allocation $\{\sigma_m(t),p_{m,n}(t)\}$. Meanwhile, we also want to explore how data redundancy $r$ impacts the performance. The problem is formulated as:
\begin{subequations}
\begin{align}
&\mathcal{P}1: \min_{\left\{\beta_n(t),~\sigma_m(t),~p_{m,n}(t)\right\}} F(\w(T-1)) \\
&\text{s.t.} ~~~~ \frac{1}{T}\sum_{t=0}^{T-1}\sum_{m=1}^{M}\left\lVert   p_{m,n}(t) \g_{m,n}(t)\right\rVert_2^2 \leq \bar{E}_n,  ~ \forall n, \label{cons_en} \\
&~~~~~~~~p_{m,n}(t)h_{m,n}(t)=\sigma_m(t)\beta_n(t),~\forall m,n, t, \label{cons_align}\\
&~~~~~~~~ \beta_n(t)\in\{0,1\}, \forall n, t. \label{cons_beta}
\end{align}
\end{subequations}
The first constraint \eqref{cons_en} states that the average energy consumed by each worker in each training round cannot exceed budget $\bar{E}_n$, due to the battery limitation of wireless devices.\footnote{We unify each round to a unit time length, and use power and energy interchangeably in this paper without any ambiguity.}
The second constraint \eqref{cons_align} states that all the scheduled workers align their power to enable over-the-air computation. 

Since the loss function $F(\w)$ is usually different for different kinds of machine learning tasks, and the evolution of the parameters during the training process is very complex, it is hard to express $F(\w(T-1))$ explicitly.
Meanwhile, the convergence rate of distributed SGD is found to be positively correlated to the number of workers scheduled, as shown in \cite{zeng2019energy} and references therein. Therefore, we consider an alternative optimization problem that maximizes the average weighted fraction of scheduled workers:
\begin{align}
	\max \frac{1}{T}\sum_{t=0}^{T-1}\frac{\gamma(t)\sum_{n=1}^{N}\beta_n(t)}{N}, 
\end{align}
where $\gamma(t)$ characterizes the importance of scheduling more workers in the $t$-th round.

We further fix $\sigma_m(t)$ to a predefined value $\sigma$ for $\forall t$. 
This assumption is based on the fact that, with analog aggregation, the convergence speed of the FL task is not very sensitive to the SNR or the average transmit power, according to \cite{mma2019federated}. 
Then $p_{m,n}(t)=\frac{\sigma\beta_n(t)}{h_{m,n}(t)}$ can be obtained from \eqref{cons_align} after deciding whether to schedule worker $n$ or not. Finally, the energy consumption is given by
\begin{align} \label{energy}
	E_n(t)=\sum_{m=1}^{M}\left\lVert   \frac{\sigma}{h_{m,n}(t)} \g_{m,n}(t)\right\rVert_2^2.
\end{align}
The alternative problem can be formulated as
\begin{subequations}
	\begin{align}
	\mathcal{P}2: &\min_{\left\{\beta_n(t)\right\}}~ \frac{1}{T}\sum_{t=0}^{T-1}u(t)\triangleq\gamma(t)\left( 1- \frac{\sum_{n=1}^{N}\beta_n(t)}{N} \right) \\
	&~~~\text{s.t.} ~~~\frac{1}{T}\sum_{t=0}^{T-1}\beta_n(t) E_n(t)\leq \bar{E}_n, ~\forall n, \label{cons_en_p2}\\
	&~~~~~~~~~~\beta_n(t)\in\{0,1\},~\forall n,t.
	\end{align}
\end{subequations}

\section{Energy-aware Worker Scheduling}
The key challenge to solve $\mathcal{P}2$ is that, constraint \eqref{cons_en_p2} is a long-term energy budget. However, in practice, the channel gains $h_{m,n}(t)$ and the power of the gradients $\left\lVert \g_{m,n}(t)\right\rVert_2^2$ cannot be acquired before the $t$-th round, and they may not be i.i.d. over time. Therefore, we design \emph{online} worker scheduling policies in this section, and carry out performance analysis without assuming any specific distributions for the channel.

For any worker $n$, it is easy to see that, its energy constraint \eqref{cons_en_p2} and the scheduling decision $\beta_n(t)$ are independent of other workers. Then $\mathcal{P}2$ can be \emph{equivalently} decoupled into $N$ individual problems
\begin{subequations}
	\begin{align}
	\mathcal{P}3: &\min_{\left\{\beta_n(t)\right\}} \frac{1}{T}\sum_{t=0}^{T-1}u_n(t)\triangleq \gamma(t) \frac{ 1- \beta_n(t)}{N} \\
	&~~~\text{s.t.}~~ \frac{1}{T}\sum_{t=0}^{T-1}\beta_n(t) E_n(t) \leq \bar{E}_n, \label{cons_en_p3} \\
	&~~~~~~~~~\beta_n(t)\in\{0,1\},~\forall t.
	\end{align}
\end{subequations}
The combination of the optimal solution of  $\mathcal{P}3$ for all workers is the optimal solution of $\mathcal{P}2$. And by solving $\mathcal{P}3$, each worker can decide whether or not to update gradient \emph{individually}.
In what follows, we design online solutions to $\mathcal{P}3$.

\subsection{Myopic Scheduling for Short-term Fixed Energy Constraint}


\begin{algorithm} [!t]
	\caption{Energy-Aware Dynamic Scheduling Policy for FL via Analog Aggregation}
	\begin{algorithmic}[1]
		\State \textbf{Initialization}: initialize global model $\w(-1)$, input $N$, $\sigma$, $ \bar{E}_n$, $\gamma(t)$, $V$,  and let $q_n(0)=q_{\text{min}}$, $\forall n$.
		\For {$t=0,1,\cdots, T-1$}
			\State Broadcast $\w(t-1)$ to all the workers. \Comment{\textit{PS}}
			\State Update $\g_n(t)$ from \eqref{up_gra}. \Comment{\textit{Each worker, in parallel}}
			\State Acquire channel gains $\{h_{m,n}\}$ for $m=1,\cdots, M$, and calculate energy consumption $E_n(t)$ according to \eqref{energy}.
			\State Make scheduling decision:
			\begin{align}\label{beta_t}
					\beta_n(t)=
				\begin{cases}
				1,~&\text{if }q_n(t)E_n(t) \leq \frac{V\gamma(t)}{N},\\
				0, ~ &\text{if }q_n(t)E_n(t) > \frac{V\gamma(t)}{N}.  
				\end{cases}
			\end{align}
			\State Update virtual queue $q_n(t)$ according to \eqref{queue_evo}.
			\State Transmit $\frac{\sigma}{h_{m,n}(t)}\g_{m,n}(t)$ in sub-channel $m$, $\forall m$.
			\State Aggregate received signal $\y_m(t)$ according to \eqref{agg_y}, and update global model $\w(t)$ according to \eqref{model_update}.\Comment{\textit{PS}}
		\EndFor
	\end{algorithmic}
\end{algorithm}

A simple way to handle the average energy constraint \eqref{cons_en_p3} is to remove the long-term summation, and transform it to a short-term fixed energy constraint $\beta_n(t) E_n(t) \leq \bar{E}_n$, for $\forall t$.
Then the \emph{myopic scheduling} policy can be given by
\begin{align}
\beta_n(t)=
\begin{cases}
1,~&E_n(t) \leq \bar{E}_n,\\
0, ~ &E_n(t) > \bar{E}_n.  
\end{cases}
\end{align}
In the $t$-th round, worker $n$ acquires the current channel gains $\{h_{m,n}(t)\}$ and the powers of the gradient $\{\lVert   \g_{m,n}(t)\rVert_2^2\}$ for $m=1,2,\cdots, M$, and calculates the required energy $E_n(t)$ to send its gradient estimate. If the required energy is no more than the budget $\bar{E}_n$, worker $n$ is scheduled.

\subsection{Energy-Aware Dynamic Scheduling for Long-term Average Energy Constraint}
Although the myopic scheduling policy is simple and can satisfy the original long-term average energy budget, it actually introduces a much tighter energy constraint. In other words, the worker uses its energy in a more conservative manner, and is less likely to be scheduled compared with that allowed by the original energy budget \eqref{cons_en_p3}.

To schedule workers more efficiently, we propose an \emph{energy-aware dynamic scheduling} policy, as shown in Algorithm 1. We construct a virtual queue $q_n(t)$ with $q_n(0)=q_{\text{min}}\geq 0$, and its evolution is given by
\begin{align} \label{queue_evo}
q_n(t+1)=\max\{q_n(t)+\beta_n(t)E_n(t)-\bar{E}_n, q_{\text{min}}\}.
\end{align}
The value of the virtual queue indicates the deficit between the current energy consumption and the budget.

As shown in Lines 3-4 in Algorithm 1, in each round, first, the PS broadcasts the up-to-date global parameter vector, and each worker runs a local gradient estimation step \emph{in parallel} based on its local data. In Lines 5-6, each worker makes the scheduling decision $\beta_n(t)$ by comparing the \emph{weighted energy} $q_n(t)E_n(t)$ and the \emph{weighted utility} $\frac{V\gamma(t)}{N}$, where $V$ is an adjustable weight parameter.
The virtual queue $q_n(t)$ transforms the long-term energy budget into instantaneous energy constraint: if $q_n(t)$ is large, it is more likely that $q_n(t)E_n(t)>\frac{V\gamma(t)}{N}$, so that the worker tends to not update the gradient to save energy; and vice versa.
Also, by introducing $q_n(t)$, $\beta_n(t)$ can be obtained without any future information, or other workers'  states and decisions. Therefore, the proposed algorithm is \emph{energy-aware}, \emph{online} and \emph{distributed}.
As shown in Lines 7-9, all the workers then update their virtual queues, and the scheduled workers transmit their gradients to the PS synchronously.
The global parameter vector is finally aggregated by the PS.

\subsection{Performance Analysis}
To analyze the performance of the proposed algorithm, we assume that $q_{\text{min}}=0$ in this subsection, and refer to the Lypunov optimization technique \cite{neely2010stochastic}. We consider a non-ergodic version of Lyapunov optimization, i.e., all the random variables can be non-i.i.d. across time; that is, 1) the distribution of the power of the gradient $||\g_{m,n}(t)||^2_2$ is unknown; 2) the channel and user mobility are not limited to specific models; 3) the total number of rounds $T$ is finite.

Define $u_n^*$ as the optimal utility of $\mathcal{P}3$ achieved by the offline genie-aided solution, and $u_n^{\dagger}$ the average utility achieved by Algorithm 1. Let $u^*=\sum_{n=1}^{N}u_n^*$ be the optimal utility of $\mathcal{P}2$,  and $u^\dagger=\sum_{n=1}^{N}u_n^\dagger$.
By applying the energy-aware dynamic scheduling policy, we get the following theorem.
\begin{theorem}
	When $q_{\text{min}}=0$, the average weighted fraction of scheduled workers achieved by Algorithm 1 satisfies:
	\begin{align} \label{bound_u}
		u^{\dagger}&\leq u^*+ \frac{T}{2V}\sum_{n=1}^{N} \alpha_n^2,
	\end{align}
	and the total energy consumption of worker $n$ is bounded by:
	\begin{align} \label{bound_en}
		\sum_{t=0}^{T-1} \beta_n(t)E_n(t) \leq T\bar{E}_n+\sqrt{T^2\alpha_n^2+2VTu_n^*},
	\end{align}
	where $\alpha_n=\max_t \left\{\left|\beta_n(t)E_n(t)-\bar{E}_n\right|\right\}$.
\end{theorem}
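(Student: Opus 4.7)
The plan is to apply a per-worker Lyapunov drift-plus-penalty analysis with the quadratic Lyapunov function $L_n(t)=\tfrac{1}{2}q_n(t)^2$ and $q_n(0)=0$. First I would bound the one-slot drift $\Delta_n(t)=L_n(t+1)-L_n(t)$ using the queue recursion~\eqref{queue_evo} and the elementary inequality $(\max\{x,0\})^2\le x^2$, obtaining
\[
\Delta_n(t) \le \tfrac{1}{2}\bigl(\beta_n(t)E_n(t)-\bar{E}_n\bigr)^2 + q_n(t)\bigl(\beta_n(t)E_n(t)-\bar{E}_n\bigr) \le \tfrac{1}{2}\alpha_n^2 + q_n(t)\bigl(\beta_n(t)E_n(t)-\bar{E}_n\bigr),
\]
where the last step invokes the uniform bound $|\beta_n(t)E_n(t)-\bar{E}_n|\le \alpha_n$.

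Next, I would form the drift-plus-penalty $\Delta_n(t)+Vu_n(t)$ with $u_n(t)=\gamma(t)(1-\beta_n(t))/N$. Grouping the $\beta_n(t)$-dependent terms in the resulting upper bound leaves $\beta_n(t)\bigl(q_n(t)E_n(t)-V\gamma(t)/N\bigr)$, so over $\beta_n(t)\in\{0,1\}$ this bound is minimized exactly by Algorithm~1's threshold rule~\eqref{beta_t}. Hence, letting $\{\beta_n^*(t)\}$ denote any alternative feasible schedule---in particular the offline genie-aided optimum attaining $u_n^*$---substituting $\beta_n^*(t)$ into the minimized upper bound yields
\[
\Delta_n(t) + V u_n(t) \le \tfrac{1}{2}\alpha_n^2 + q_n(t)\bigl(\beta_n^*(t)E_n(t)-\bar{E}_n\bigr) + V u_n^*(t).
\]

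Summing this from $t=0$ to $T-1$ telescopes the drift and, using $q_n(0)=0$ together with $\sum_t u_n^*(t)=T u_n^*$, gives
\[
\tfrac{1}{2}q_n(T)^2 + V\sum_{t=0}^{T-1} u_n(t) \le \tfrac{T}{2}\alpha_n^2 + VTu_n^* + C_n, \qquad C_n \triangleq \sum_{t=0}^{T-1} q_n(t)\bigl(\beta_n^*(t)E_n(t)-\bar{E}_n\bigr).
\]
For the utility bound~\eqref{bound_u} I would drop $\tfrac{1}{2}q_n(T)^2\ge 0$, divide by $VT$, and sum over $n$. For the energy bound~\eqref{bound_en} I would use the one-sided inequality $q_n(t+1)\ge q_n(t)+\beta_n(t)E_n(t)-\bar{E}_n$ implied by the max in~\eqref{queue_evo}, which telescopes into $\sum_{t=0}^{T-1}\beta_n(t)E_n(t)\le T\bar{E}_n+q_n(T)$; it then remains to upper-bound $q_n(T)$ by dropping $V\sum_t u_n(t)\ge 0$ above and taking a square root.

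The main obstacle is controlling the cross term $C_n$, because the non-i.i.d. setting rules out the standard Lyapunov trick of choosing a stationary randomized benchmark that makes $\mathbb{E}[\beta_n^*(t)E_n(t)-\bar{E}_n]\le 0$. I would instead use a purely sample-path argument: since the queue can grow by at most $\alpha_n$ per slot and starts at $0$, we have $q_n(t)\le t\alpha_n$, which combined with $|\beta_n^*(t)E_n(t)-\bar{E}_n|\le \alpha_n$ gives $C_n \le \alpha_n^2\,T(T-1)/2 \le T^2\alpha_n^2/2$. Plugging this back into the telescoped inequality is exactly what produces the extra $T$-factor in~\eqref{bound_u} and the $T^2\alpha_n^2$ term under the square root in~\eqref{bound_en}; the remaining manipulations are routine algebra.
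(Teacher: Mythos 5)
Your proposal is correct and follows essentially the same route as the paper's Appendix A: the quadratic per-worker Lyapunov function, the one-slot drift-plus-penalty bound minimized by the threshold rule \eqref{beta_t}, substitution of the genie-aided schedule, and the sample-path bound $q_n(t)\le t\alpha_n$ to control the cross term (yielding $\sum_t t\alpha_n^2 = T(T-1)\alpha_n^2/2$, which combines with the $T\alpha_n^2/2$ drift term to give exactly $T^2\alpha_n^2/2$). The only cosmetic difference is that you isolate the cross term as $C_n$ and justify its bound more explicitly than the paper does.
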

\begin{proof}
	See Appendix A.
\end{proof}

Theorem 1 shows that, the average utility and the total energy consumption achieved by the proposed energy-aware dynamic scheduling policy have deviation bounds, compared with the optimal genie-aided policy and the energy budget, respectively. Both deviations are positively correlated to the maximum energy deficit $\alpha_n$, and can be traded-off by the weight parameter $V$.

\subsection{Discussions}
We remark that, according to \eqref{beta_t}, when $q_{\text{min}}=0$, the scheduling indicator $\beta_n(t)=1$ if $q_n(t)=q_{\text{min}}$. However, the energy cost $E_n(t)$ may be very high, leading to a large upper bound on the energy deficit $\alpha_n$, and thus a large deviation from the optimal genie-aided solution. Moreover, the worker cannot be scheduled for many rounds afterwards, which in turn reduces the average utility. Therefore, although $q_{\text{min}}=0$ in the classical Lyapunov optimization, and the worst-case utility and energy consumption can be guaranteed, in the experiments, we set $q_{\text{min}}>0$, which is more reasonable. 

In Algorithm 1, the weight parameter $\gamma(t)$ is also involved in balancing the utility and energy consumption.
In practice, we can set $\gamma(t)$ as a relatively large value at the beginning of the training process, and decrease it across time, since 
1) Scheduling more workers at the initial training rounds helps the learning process converge faster;
2) The power of the gradients reduces across time, according to the experiments in the following;
3) The battery level decreases as time goes by, so that we should use energy more and more conservatively.

\section{Experiments}
In this section, we evaluate the gain of data redundancy $r$ and the performance of the proposed worker scheduling policy for a digit recognition FL task, using the MNIST dataset\footnote{http://yann.lecun.com/exdb/mnist/} with $60000$ training samples and $10000$ test samples. We consider $N=50$ workers, $M=100$ sub-channels and $T=100$ training rounds.
We divide the dataset in either an i.i.d. or a non-i.i.d. fashion. For i.i.d. data partition, the data samples are randomly partitioned into $N$ datasets $\mathcal{D}_1, \cdots, \mathcal{D}_N$. For the non-i.i.d case, the data samples are first sorted by the order of digits, and each dataset is formed by $1200$ data samples from a single digit. Each worker stores $r$ of these datasets in a cyclic manner according to \eqref{cyclic_r}, a total of $1200r$ samples. During each round, a fraction $\lambda_n(t)=\frac{1}{r}$ of data samples are randomly chosen from the $1200r$ data samples worker $n$ stores for training, so that the amount of computation for training is the same under different data redundancies.

\begin{figure}[!t]
	\centering
	\includegraphics[width=0.45\textwidth]{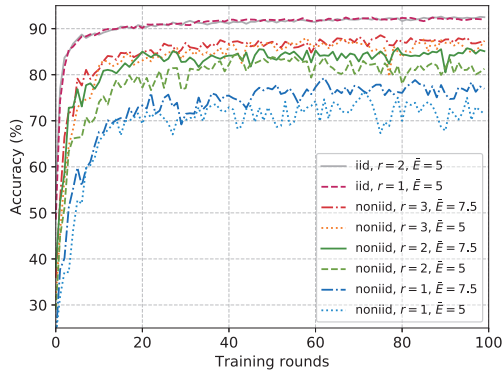}
		\vspace{-3mm}
	\caption{The accuracy of the MLP under different data redundancies using the myopic scheduling policy.}	\label{acc_fix}
\end{figure}

\begin{figure}[!t]
	\centering
	\includegraphics[width=0.45\textwidth]{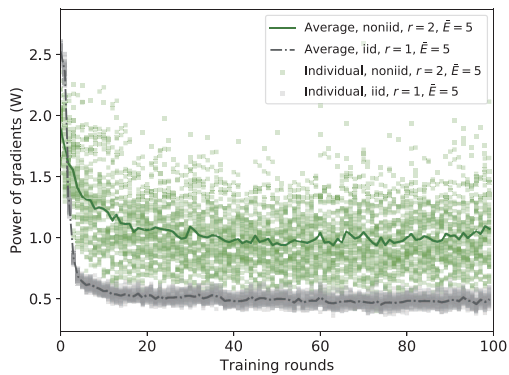}
	\vspace{-3mm}
	\caption{Variation of the power of the gradients with training rounds.}	\label{power}
		\vspace{-7mm}
\end{figure}

We train a multilayer perceptron (MLP) with a 784-neuron input layer, a 64-neuron hidden layer, and a 10-neuron softmax output layer. The total number of parameters is 50890. Cross entropy is adopted as the loss function, and rectified linear unit (ReLU) activation is used. We set the learning rate $\eta_t$ as 0.05, the dropout probability as 0.5, and use a momentum of 0.5. 
We consider Rayleigh fading channels from the workers to the PS, where the channel gain $h_{m,n}(t)$ follows the standard complex normal distribution, and each entry of the additive noise vector $\z_m(t)$ follows standard normal distribution. Also, the power scalar is set as $\sigma=1$.
For the energy-aware dynamic scheduling policy, we let $V=1500$, and $q_{\text{min}}=0.3$.

In Fig. \ref{acc_fix}, we first explore the impact of data redundancy by evaluating the accuracy of the MLP using the myopic scheduling policy under both i.i.d. and non-i.i.d. data. The energy budget of each worker is set the same, with $\bar{E}_n=\bar{E}=7.5~\mathrm{J}$ or $5~\mathrm{J}$. Overall, the accuracy with i.i.d. data outperforms that with non-i.i.d. data. When data is i.i.d., redundancy hardly brings any benefit to the system, as the workers can already use images of different digits to train the model even if $r=1$. However, data redundancy can significantly improve the accuracy of the MLP under non-i.i.d. data. Specifically:
1) Given the energy budget, as the data redundancy increases, the accuracy of the model increases with diminishing marginal gain. When $\bar{E}=5$, an improvement of $9.8\%$ is achieved by increasing redundancy from $r=1$ to $r=2$, and $6\%$ from $r=2$ to $r=3$.
2)  Increasing the energy budget helps to improve the performance of the MLP since more workers can be scheduled in each round, while the system is less sensitive to the energy constraint in the case of i.i.d. data. Comparing $\bar{E}=5$ with $\bar{E}=7.5$, the accuracy improves by $5.5\%$ when $r=1$, by $3.8\%$ when $r=2$, but less than $1\%$ when $r=3$.

In Fig. \ref{power}, we plot the power of the gradients $||\g_n(t)||_2^2$ to guide the parameter design of the energy-aware dynamic scheduling policy.
Each scatter point represents the power of the gradient of a worker in that round, and the lines are obtained by averaging these across workers, i.e., $\frac{1}{N}\sum_{n=1}^{N}||\g_n(t)||_2^2$. 
We find that, the power of the gradients reduces dramatically in the first 10-15 training rounds, and is quite stationary afterwards.
Motivated by this observation, the weight parameter $\gamma(t)$ is set as:
\begin{align}
	\gamma(t)=
	\begin{cases}
	2, &0\leq t < 9,  \\
	2-0.2(t-9), &10\leq t < 15, \\
	1,&t\geq 15.
	\end{cases}
\end{align}

 \begin{figure}[!t]
	\centering	
	\subfigure[Accuracy.]{\label{dyn_acc}	
		\includegraphics[width=0.45\textwidth]{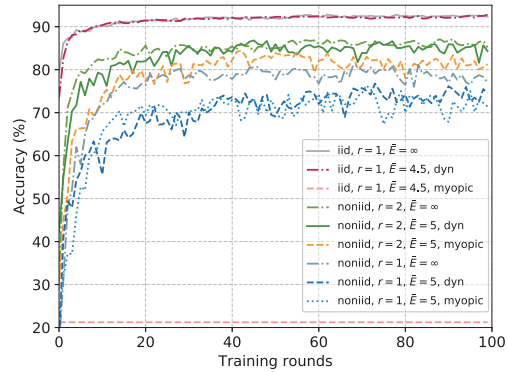}} \\
	\subfigure[Fraction of workers scheduled.]{\label{dyn_frac}	
		\includegraphics[width=0.45\textwidth]{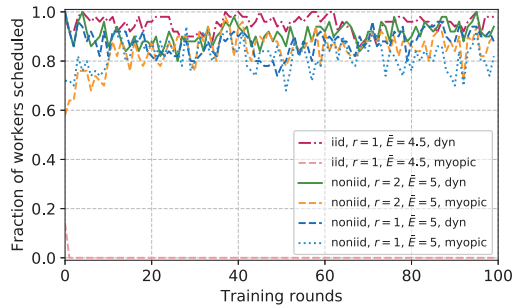}}\\ 
		\subfigure[Cumulative energy consumption.]{\label{dyn_en}	
		\includegraphics[width=0.45\textwidth]{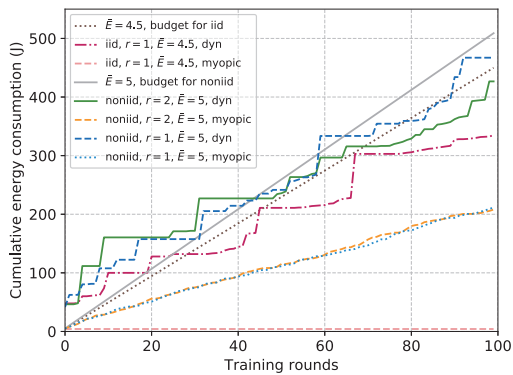}} 
	\caption{The performance of the proposed energy-aware dynamic scheduling policy, compared with the myopic scheduling policy and the upperbound.}
	\label{dyn}
	\vspace{-5mm}
\end{figure}

The performance of the proposed energy-aware dynamic scheduling policy is then presented in Fig. \ref{dyn}.
As shown in Fig. \ref{dyn_acc}, we compare the accuracy of the dynamic scheduling policy with the myopic policy and an upperbound, which is achieved by setting $\bar{E}=\infty$, so that all the workers can be scheduled in each round. We can see that, under both i.i.d. and non-i.i.d. data distributions, the proposed algorithm can achieve close-to-optimal accuracy, while outperforming the myopic policy. Note that under i.i.d. data distribution, we further reduce the energy budget to $\bar{E}=4.5~\mathrm{J}$.
Since the power of the gradient is high at the beginning of training, no worker can satisfy the energy constraint using the myopic policy. However, the dynamic scheduling policy enables workers to borrow energy from the future, so that workers can be scheduled.
Fig. \ref{dyn_frac} and Fig. \ref{dyn_en} present the fraction of workers  $\frac{\sum_{n=1}^{N}\beta_n(t)}{N}$ scheduled in each round, and the maximum cumulative energy consumption $\max_{\{n\in\mathcal{N}\}}\sum_{\tau=0}^{t}\beta_n(\tau)E_n(\tau)$ over workers, respectively. By using the dynamic scheduling policy, more workers can be scheduled compared with the myopic policy, and the energy can be fully utilized.
Specifically, with non-i.i.d. data, when $\bar{E}=5~\mathrm{J}$ and data redundancy is $r=2$, dynamic policy schedules an average of $90.9\%$ workers in each round, while myopic policy can only schedule $84.6\%$ workers.
Moreover, the dynamic scheduling policy schedules more workers at the beginning while the energy is sufficient, so that the convergence of the MLP is accelerated.

\section{Conclusions}
We have considered analog aggregation for FL over wireless channels and introduced data redundancy to the system to deal with non-i.i.d. data. We have proposed an energy-aware worker scheduling policy to maximize the weighted fraction of scheduled workers, which works in an online, distributed manner with performance guarantee. Experiments on the MNIST dataset have been carried out, showing that for non-i.i.d. data, increasing data redundancy to $r=2$ can improve the accuracy by $9.8\%$ under a stringent energy budget. Further increases in redundancy lead to diminishing improvements in accuracy. We have also shown that, the proposed energy-aware dynamic scheduling policy can achieve close-to-optimal performance without violating the energy budget, and on average schedule $6\%$ more workers than a heuristic myopic policy.
In the future, we plan to quantify the impact of data redundancy, and consider gradient compression and error accumulation to further reduce the communication cost.

\section*{Acknowledgment}
This work is sponsored in part by the European Research Council (ERC) under Starting Grant BEACON (grant No. 677854), the Nature Science Foundation of China (No. 61871254, No. 91638204, No. 61571265, No. 61861136003, No. 61621091), National Key R\&D Program of China 2018YFB0105005, and Intel Collaborative Research Institute for Intelligent and Automated Connected Vehicles. 

\appendices{}
\section{Proof of Theorem 1}
Let $y_n(t)=\beta_n(t)E_n(t)-\bar{E}_n$. From \eqref{queue_evo}, we have $y_n(t)\leq q_n(t+1)-q_n(t)$, $q_n^2(t+1) \leq \left(q_n(t)+y_n(t)\right)^2$, $|q_n(t+1)-q_n(t)|\leq |y_n(t)|$, and $(q_n(t+1)-q_n(t))y_n(t)\leq y_n^2(t)$.
When $q_{\text{min}}=0$, $q_n(0)=0$, and thus
\begin{align} \label{yn_sum}
\sum_{t=0}^{T-1}y_n(t)=\sum_{t=0}^{T-1}\beta_n(t)E_n(t)-T\bar{E}_n\leq q_n(T).
\end{align}
Define the Lyapunov function as $L_n(t)\triangleq \frac{1}{2} q_n^2(t)$, and the one-slot Lyapunov drift as:
\begin{align}
\Delta_n^{[1]}(t)&\triangleq L_n(t+1)-L_n(t)=\frac{1}{2}q_n^2(t+1)- \frac{1}{2}q_n^2(t) \nonumber\\
&\leq \frac{1}{2}y_n^2(t) + q_n(t)y_n(t) \leq    \frac{1}{2} \alpha_n^2+ q_n(t)y_n(t),
\end{align}
where $\alpha_n=\max_t \left\{ |y_n(t)|\right\}$. The one-slot drift-plus-penalty function is given by:
\begin{align} \label{drift_plus_penalty}
\Delta_n^{[1]}(t)+Vu_n(t)\leq \frac{1}{2} \alpha_n^2+ q_n(t)y_n(t)+Vu_n(t).
\end{align}

The main idea of the proposed dynamic scheduling policy is to minimize the right-hand-side of \eqref{drift_plus_penalty}.
We have:
\begin{align} \label{dyn_opt}  
\min_{\beta_n(t)}  q_n(t)y_n(t)+Vu_n(t)
\Leftrightarrow \min_{\beta_n(t)} \beta_n(t) \left(q_n(t)E_n(t) - \frac{V\gamma(t)}{N} \right)-q_n(t)\bar{E}_n+\frac{V\gamma(t)}{N}. 
\end{align}
Since $\beta_n(t)\in \{0,1\}$, the optimal solution of \eqref{dyn_opt} is:
\begin{align} \label{policy_q0}
\beta_n(t)=
\begin{cases}
1,~&q_n(t)E_n(t) \leq \frac{V\gamma(t)}{N},\\
0, ~ &q_n(t)E_n(t) > \frac{V\gamma(t)}{N}.  
\end{cases}
\end{align}

Define the $T$-slot drift as $\Delta_n^{[T]}\triangleq L_n(T)-L_n(1)=\frac{1}{2}q_n^2(T)$. Then the $T$-slot drift-plus-penalty function can be bounded by:
\begin{align} \label{t_slot}
\Delta_n^{[T]}+V\sum_{t=0}^{T-1}u_n(t)&\leq \sum_{t=0}^{T-1} \left(\frac{1}{2} \alpha_n^2+q_n(t)y_n(t)\right) 
+V\sum_{t=0}^{T-1}u_n(t)\nonumber\\
&\leq \frac{T}{2} \alpha_n^2 + \sum_{t=0}^{T-1} \left(q_n^*(t)-q_n(0)\right)y_n^*(t)+VTu_n^* \nonumber\\ 
&\leq  \frac{T}{2} \alpha_n^2 +\sum_{t=0}^{T-1} t\alpha_n^2+VTu_n^*= \frac{T^2}{2}\alpha_n^2 +VTu_n^*,
\end{align}
where $u_n^*$, $q_n^*(t)$ and $y_n^*(t)$ are the optimal utility of $\mathcal{P}3$ obtained by the optimal genie-aided policy, and the corresponding value of queue and energy deficit. The inequality is obtained since the proposed algorithm minimizes $q_n(t)y_n(t)+Vu_n(t)$ in each round.

Since $\Delta_n^{[T]}\geq 0$ and $u_n(t)\geq 0$, from \eqref{yn_sum} and \eqref{t_slot}, we have
\begin{align}
	\sum_{t=0}^{T-1}\beta_n(t)E_n(t)-T\bar{E}_n&\leq q_n(T)\leq \sqrt{T^2\alpha_n^2+2VTu_n^*}, \nonumber\\
	u_n^{\dagger}&\leq u_n^*+ \frac{T}{2V} \alpha_n^2 . \label{bound_un}
\end{align}
By summing \eqref{bound_un} over $n=1,\cdots,N$, we prove Theorem 1.

\end{document}